\documentclass{article}
\pdfpagewidth=8.5in
\pdfpageheight=11in

\usepackage{ijcai20}
\usepackage{color}
\usepackage{times}
\usepackage{soul}
\usepackage{url}
\usepackage[hidelinks]{hyperref}
\usepackage[utf8]{inputenc}
\usepackage[small]{caption}
\usepackage{graphicx}
\usepackage{amsmath}
\usepackage{amsthm}
\usepackage{comment}
\usepackage{booktabs}
\usepackage{indentfirst}
\usepackage{stfloats}
\usepackage[noend]{algpseudocode}
\usepackage{algorithm}
\usepackage{setspace}
\urlstyle{same}

\textfloatsep 0.5mm plus 0.5mm \intextsep 0.5mm plus 0.5mm



\newtheorem{problem}{Problem}
\newtheorem{lemma}{Lemma}
\newtheorem{definition}{Definition}

\title{An Augmented Index-based Efficient Community Search \\for Large Directed Graphs}

\author{
Yankai Chen$^1$\and
Jie Zhang$^2$\and
Yixiang Fang$^3$\footnote{Corresponding Author}\and
Xin Cao$^3$\And
Irwin King$^1$\\
\affiliations
$^1$Department of Computer Science and Engineering, The Chinese University of Hong Kong\\
$^2$School of Computer Science and Engineering, Nanyang technological university\\
$^3$The University of New South Wales\\
\emails
\{ykchen, king\}@cse.cuhk.edu.hk,
zhangj@ntu.edu.sg,
\{yixiang.fang, xin.cao\}@unsw.edu.au
}

\begin{document}

\maketitle

\begin{abstract}
Given a graph $G$ and a query vertex $q$, the topic of \emph{community search} (CS), aiming to retrieve a dense subgraph of $G$ containing $q$, has gained much attention. Most existing works focus on undirected graphs which overlooks the rich information carried by the edge directions. Recently, the problem of community search over directed graphs (or CSD problem) has been studied \cite{fang2019effective}; it finds a connected subgraph containing $q$, where the in-degree and out-degree of each vertex within the subgraph are at least $k$ and $l$, respectively. However, existing solutions are inefficient, especially on large graphs. To tackle this issue, in this paper we propose a novel index called \emph{D-Forest}, which allows a CSD query to be completed within the optimal time cost. We further propose efficient index construction methods. Extensive experiments on six real large graphs show that our index-based query algorithm is up to two orders of magnitude faster than existing solutions.
\end{abstract}

\section{Introduction}
\label{sec:intro}
With the rapid development of information technologies, large graphs are ubiquitous in various areas (e.g., social networks and biological science)~\cite{ying2018graph}. Finding communities over these graphs is fundamental to many real applications, such as event organization, recommendation, and network analysis. In recent years, the topic of \emph{community search} (CS) has gained much attention (e.g., \cite{sozio2010community,cui2014local,huang2014querying,fang2019survey}),
which aims to find dense communities containing the query vertex $q$ from a graph $G$ in an online manner.

Earlier CS works (e.g., ~\cite{sozio2010community,cui2013online,cui2014local,barbieri2015efficient}) mainly focus on undirected graphs, where the graph edges do not have directions.
They often require the community to be a connected subgraph satisfying a particular metric of structure cohesiveness (e.g., each vertex within the community has a degree of $k$ or more).
Later on, the attributes (e.g., keywords \cite{fang2017effective}, locations \cite{fang2017spatial}, profiles \cite{PCS19}, and influence values \cite{li2015influential}) of graphs have also been considered for CS.
However, all these works ignore the directions of edges in the graphs. As pointed out in~\cite{malliaros2013clustering,zhang2014proposed}, the ignorance of directions of edges, failing to capture asymmetric relationships implied by the directions, may lead to noise and inaccurate results.

\begin{figure}[]
\centering
     \includegraphics[width=1\columnwidth]{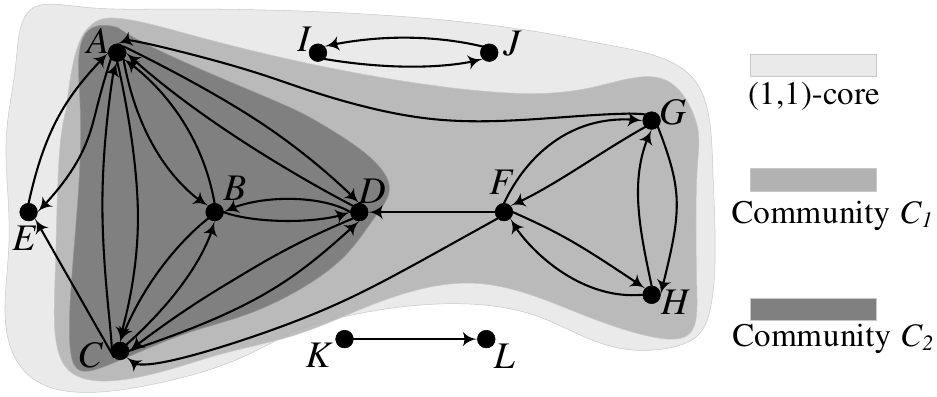}
    \caption{A directed graph.}
    \label{fig:directedG}
\end{figure}

To remedy the issue above, Fang et al.~\shortcite{fang2019effective} studied the problem of CS over directed graphs, or CSD problem. Specifically, given a query vertex $q$ of a directed graph $G$, and two integers $k$ and $l$, it aims to find a maximum connected subgraph containing $q$, where the in-degree and out-degree of each vertex within the subgraph are at least $k$ and $l$, respectively.
For example, in Figure \ref{fig:directedG}, let $q$=$B$ and $k$=$l$=2; then, the subgraph $C_1$ will be returned; if $k$=$l$=3, then the subgraph $C_2$ is the answer.
The minimum degree constraints have also been used in the ($k$, $l$)-core \cite{giatsidis2013d}, or the maximum subgraph in which each vertex's in-degree and out-degree are at least $k$ and $l$, respectively. Note that the ($k$, $l$)-core may not be a connected subgraph.

To answer a CSD query, an online method is to iteratively peel vertices that do not satisfy the degree constraints until finding the subgraph satisfying both the connectivity and minimum degree constraints. Obviously, this iterative approach could be costly. To improve efficiency, Fang et al.~\shortcite{fang2019effective} develop indexes, which pre-compute all the ($k$, $l$)-cores and then organize them compactly. Given a CSD query, they first find the ($k$, $l$)-core using the indexes, and then compute the maximum connected subgraph containing $q$ from the ($k$, $l$)-core. Clearly, since this maximum connected subgraph is often much smaller than the ($k$, $l$)-core, their solutions are inefficient if the ($k$, $l$)-core is very large. For example, on a graph with about 4 billion edges, they may take over 33 minutes to answer 200 queries as shown by our experiments.

To facilitate efficient CSD queries, in this paper we develop a novel index structure, called D-Forest, which not only compactly organizes all the ($k$, $l$)-cores, but also takes the connectivity into consideration. Specifically, we first compute all the ($k$, $l$)-cores sequentially where $k$ and $l$ range from 0 to their maximum values. Then, for each value of $k$, we compute their connected components and organize them into a tree structure, where each tree node corresponds to a connected ($k$, $l$)-core. As a result, the index is a forest, consisting of a list of trees.
Given the D-Forest, to answer a CSD query, we first find the $k$-th tree and then return the connected ($k$, $l$)-core containing $q$. Clearly, the query takes the optimal query time cost, i.e., $O(|C|)$, where $C$ is the set of vertices in the community. 
We also discuss index maintenance for dynamic graphs and how to use D-Forest to answer the query of a variant of the CSD problem, i.e., SCSD \cite{fang2019effective}. 
Experiments on six real large graphs show that our index construction process takes comparable time cost with those of existing algorithms, and our query algorithm takes only 14.4 seconds to answer 200 queries on a graph with around 4 billion edges.

In summary, our main contributions are as follows:
\begin{itemize}
\item We develop a novel index D-Forest, based on which a CSD query can be completed in optimal time cost.
\vspace{-0.05in}
\item To build the D-Forest, we propose a basic algorithm, and an advanced algorithm by introducing an auxiliary data structure called Core-based Union-Find (or CUF).
\vspace{-0.05in}
\item We discuss the index maintenance for dynamic graphs, and how to answer the SCSD query using D-Forest.
\vspace{-0.05in}
\item We perform extensive experiments on six  real large graphs; results show that our query algorithm is up to two orders of magnitude faster than existing solutions.
\end{itemize}

\section{Related Works}
\label{sec:related}
\textbf{Community detection (CD).} Generally, CD aims to detect all the communities from an entire graph. Earlier classic solutions (e.g., \cite{newman2004finding,newman2004fast,fortunato2010community}) rely on edge-based analysis (e.g., modularity maximization) to discover these communities. However, most of them focus on undirected graphs.
Recent works start to detect communities from directed graphs.
In \cite{leicht2008community,kim2010finding}, the concept of modularity maximization~\cite{newman2004fast} is extended for CD on directed graphs.
In \cite{lancichinetti2009benchmarks}, authors introduced new benchmark graphs to test CD methods over directed graphs.
Yang et al. \shortcite{yang2010directed} introduced a new stochastic block model called PPL to find communities in directed graphs; they also detected overlapped communities in directed graphs \cite{yang2014detecting}.
Besides, there are also some local CD methods (e.g., \cite{flake2000efficient,ning2016local}).
A recent survey of CD solutions on directed graphs can be found in~\cite{malliaros2013clustering}.
However, as pointed out in~\cite{fortunato2016community}, CD is an ill-defined problem with no clear universal definition of the objects that one should be looking for.
However, these CD methods are often time consuming, especially on large graphs, and also it is not clear how they can be adapted for online CS.

\textbf{Community search (CS).} CS finds the community from a large graph in a fast and online manner, based on a query request.
To measure the structure cohesiveness of communities, the $k$-core metric is often employed, requiring that each vertex of the community should have a degree of $k$ or more, where $k$ is a given integer ~\cite{batagelj2003m,dorogovtsev2006k,sozio2010community,cui2014local,li2015influential,fang2017effective,fang2017spatial,PCS19}.
Other cohesiveness metrics have also been considered for CS, such as $k$-clique~\cite{yang2011social,cui2013online},
$k$-truss~\cite{huang2014querying,huang2015approximate,huang2016truss,akbas2017truss,huang2017attribute,ebadian2019fast} and $k$-ECC~\cite{hu2017minimal}, pagerank-based~\cite{andersen2006communities,kloumann2014community}, etc.
A survey of CS over graphs can be found in~\cite{fang2019survey}.
However, most of these works focus on undirected graphs. A recent work \cite{fang2019effective} has studied CS over directed graphs, but its solutions are still inefficient for large graphs, calling for more efficient CS approaches.

\section{Problem Definition}
\label{sec:prob}
We consider a directed graph $G(V,E)$ with a vertex set $V$ and an edge set $E$. The sizes of $V$ and $E$ are respectively denoted by $n$ and $m$. The in-degree and out-degree of a vertex $v$ in $G$ are denoted by $deg_G^{in}(v)$ and $deg_G^{out}(v)$. Next, we introduce the core model on directed graphs.

\begin{definition}[$(k,l)$-core~\cite{giatsidis2013d}]
Given a directed graph $G(V,E)$ and two non-negative integers $k$ and $l$, the ($k$,$l$)-core of $G$ is the largest subgraph $G'$ of $G$, such that $\forall v \in G'$, $deg^{in}_{ G'}(v) \geq k$ and $deg^{out}_{ G'}(v) \geq l$.
\end{definition}

In the $(k,l)$-core, each vertex has at least $k$ in-neighbours and $l$ out-neighbours, so it is well engaged in the subgraph especially when $k$ and $l$ are large. This implies that the $(k,l)$-core is a cohesive subgraph, and thus can be used to model the cohesiveness of the community~\cite{fang2019effective}. However, the ($k$, $l$)-core may not be a connected subgraph, so the connectivity constraint should be further imposed to model the community. Note that for simplicity, we denote a connected ($k$, $l$)-core by ($k$, $l$)-$\widehat{core}$.

Based on the discussions above, Fang et al.~\shortcite{fang2019effective} formally introduced the problem of \underline{C}ommunity \underline{S}earch over \underline{D}irected graphs (CSD) problem:

\begin{problem}[CSD problem~\cite{fang2019effective}]
Given a directed graph $G(V,E)$, a query vertex $q$, and two positive integers $k$ and $l$, return the ($k$, $l$)-$\widehat{core}$ containing $q$.
\end{problem}

For example, in Figure \ref{fig:directedG}, the (1, 1)-core, (2, 2)-core, and (3, 3)-core are marked in three different colors, where the (1, 1)-core has three connected components. If $q$=$B$ and $k$=$l$=3, then the (3, 3)-$\widehat{core}$ $C_2$ is returned as the community.
\section{Our Index-based Approach}
\label{sec:basic}
To enable efficient CSD queries, in this paper we propose a novel index, called D-Forest, which allows the targeted community to be retrieved directly without examining the ($k$, $l$)-core. As a result, the query time cost is optimal. Meanwhile, the index is space efficient since it takes $O(m)$ space cost. In the following sections, we first give an overview of D-Forest, and then present two algorithms to build D-Forest.

\subsection{Index Overview}
\label{sec:overview}
We begin with an interesting lemma.

\begin{lemma}[\cite{fang2019effective}]
\label{lm:nest}
Given a directed graph $G$, for any ($k$, $l$)-core with $l\textgreater0$, it is a subgraph of the ($k$, $l$$-$$1$)-core, i.e., the ($k$, $l$)-core is nested within the ($k$, $l$$-$$1$)-core.
\end{lemma}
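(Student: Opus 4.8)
The plan is to exploit the maximality built into the definition of the $(k,l)$-core. First I would let $H$ denote the $(k,l)$-core of $G$, so that by Definition~1 every vertex $v\in H$ satisfies $deg_H^{in}(v)\geq k$ and $deg_H^{out}(v)\geq l$. The strategy is to show that $H$ is itself an admissible candidate for the $(k,l{-}1)$-core, and then let maximality do the rest of the work.

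The second step is an immediate degree comparison. Since the hypothesis gives $l>0$, we have $l-1\geq 0$ and in particular $deg_H^{out}(v)\geq l\geq l-1$ for every $v\in H$, while the in-degree bound $deg_H^{in}(v)\geq k$ carries over verbatim. Hence $H$ is a subgraph of $G$ in which each vertex has in-degree at least $k$ and out-degree at least $l-1$, measured within $H$; that is, $H$ satisfies exactly the defining requirement of a $(k,l{-}1)$-core.

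Finally I would conclude that $H$ is contained in the $(k,l{-}1)$-core. Because the $(k,l{-}1)$-core is defined as the \emph{largest} subgraph of $G$ meeting those two degree constraints, any qualifying subgraph---in particular $H$---must be nested within it, which is precisely the claim. The one point I would take care to justify is that this ``largest'' subgraph is well-defined and genuinely absorbs every other qualifying subgraph: if $G_1$ and $G_2$ both satisfy the $(k,l{-}1)$ degree constraints, then so does their union $G_1\cup G_2$, since forming a union can only add neighbours and therefore never lowers any vertex's in- or out-degree. Thus the union of all qualifying subgraphs is again qualifying and is the unique maximum one, namely the $(k,l{-}1)$-core. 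I expect this maximality/uniqueness observation to be the only real obstacle; the degree-monotonicity steps above are immediate from the definitions.
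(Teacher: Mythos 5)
Your proof is correct. Note that the paper itself does not prove this lemma---it is imported from \cite{fang2019effective} without proof (the appendix proofs cover only Lemmas 2--7)---so there is no in-paper argument to compare against; your argument is the standard one: the $(k,l)$-core satisfies the $(k,l{-}1)$ degree constraints since $l \geq l-1$, and maximality of the $(k,l{-}1)$-core (made rigorous, as you rightly do, by observing that qualifying subgraphs are closed under union, so the unique maximum absorbs every qualifying subgraph) gives the nesting.
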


By Lemma \ref{lm:nest}, we can conclude that for any specific value of $k$, all the ($k$, $l$)-cores where $l$ ranges from 0 to its maximum value can be organized into a chain such that each one is nested within its previous one.
Similarly, the nested property above holds for the connected components of ($k$, $l$)-cores; that is, for any ($k$, $l$)-$\widehat{core}$ with $l\textgreater0$, it is nested within a particular ($k$, $l$--1)-$\widehat{core}$, so we can get a chain for each ($k$, $l$)-$\widehat{core}$. Consequently, for each value of $k$, we can build a tree structure called $k$-tree, by hierarchically organizing all these chains, such that each subtree corresponds to a ($k$, $l$)-$\widehat{core}$.

For example, all ($k$, $l$)-$\widehat{core}$s in Figure \ref{fig:directedG} can be organized into 4 trees, as depicted in Figure \ref{fig:index}, where the $k$-tree is built for the value of $k$ and each subtree contains all the vertices of a particular ($k$, $l$)-$\widehat{core}$. For instance, in the $1$-tree, the subtree rooted at node~\footnote{In this paper, we use ``node'' to mean the ``index tree node''.} $p$, as shown in the dashed box, contains vertices $\{F,G,H\}$ and $\{A,B,C,D\}$, which are the vertices in the $(1,2)$-$\widehat{core}$. Note that the number attached in each node indicates the value of $l$ for the corresponding ($k$, $l$)-$\widehat{core}$ and we use the root node $t$ to keep the tree shape.
 
\begin{figure}[htp]
    \centering
     \hspace{-0.2cm}\includegraphics[width=1\columnwidth]{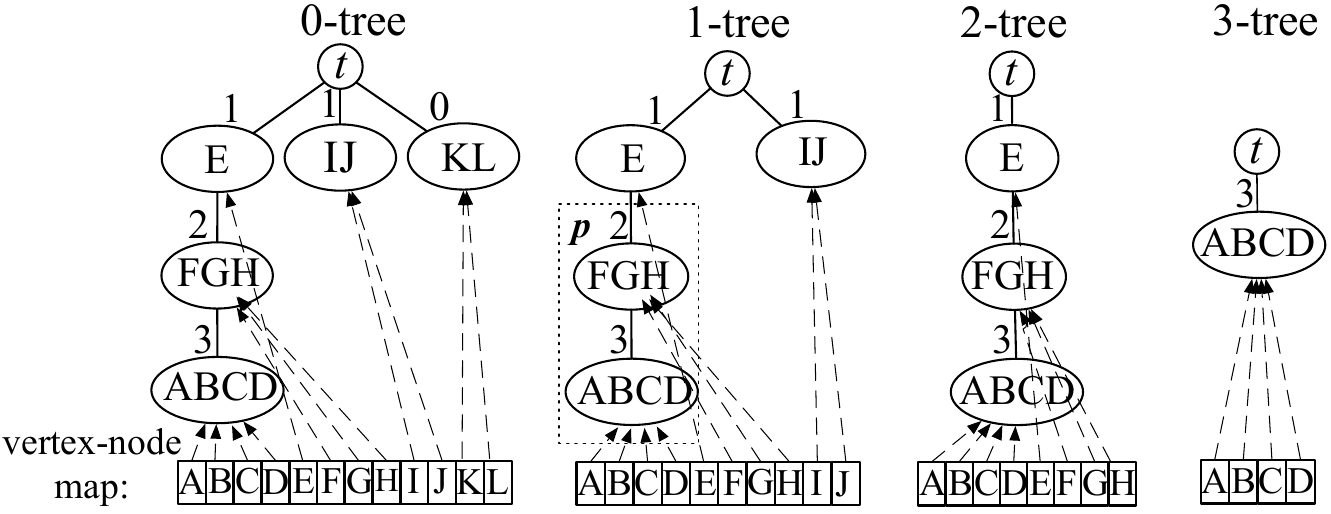}
    \caption{An example D-Forest index.}
    \label{fig:index}
\end{figure}

To summarize, in the $k$-tree, each node has four elements:

\begin{itemize}
\item $parent$: a pointer to its parent node;
\vspace{-0.05in}
\item $childList$: a list of pointers to its child nodes;
\vspace{-0.05in}
\item $vSet$: a set of vertices, which are in the ($k$, $l$)-$\widehat{core}$ but not in the ($k$, $l$--1)-$\widehat{core}$;
\vspace{-0.05in}
\item $coreNum$: the value of $l$ of the ($k$, $l$)-$\widehat{core}$, which corresponds to the subtree rooted at this node.
\end{itemize}

To enable efficient locating of the ($k$, $l$)-$\widehat{core}$, for each tree, we build an auxiliary map, where each key is a vertex and its value points to the node containing the vertex in the tree. All proofs of lemmas are attached in Appendix~\ref{appendixA}. 

\begin{lemma}[Space cost]
\label{lm:space}
Given a directed graph $G$, its D-Forest takes $O(m)$ space.
\end{lemma}

\noindent{\bf\underline{$\bullet$ Query algorithm.}} To answer a CSD query, we can first use the auxiliary map to locate the tree node $p$ which contains $q$, then find the root of the subtree corresponding to the ($k$, $l$)-$\widehat{core}$ containing $q$, and finally return the set $C$ of vertices in the subtree. We denote the query algorithm above by {\tt IDX-Q}.

\begin{lemma}[Query cost]
\label{lm:query}
Given a D-Forest, {\tt IDX-Q} completes in the optimal time and space cost, i.e., $O(|C|)$.
\end{lemma}


Next, we present two index construction algorithms, which work in the top-down and bottom-up manners, respectively.

\subsection{A Top-Down Index Construction Method}
\label{sec:basic}
In \cite{fang2019effective}, an efficient algorithm of decomposing ($k$, $l$)-cores for a graph is developed. It enumerates $k$ from 0 to its maximum value $k_{\max}$, and for each $k$, it computes all the ($k$, $l$)-cores where $l$ ranges from 0 to its maximum value $l_{\max}$. Since D-Forest is comprised of ($k_{\max}$$+$$1$) trees, we can also enumerate $k$ from 0 to $k_{\max}$, and for each specific $k$, we build the $k$-tree by the following the steps:

\begin{enumerate}
\item compute the ($k$, $0$)-core, create a node with a set of vertices in ($k$, $0$)-core, and initialize $l$=$1$;
\vspace{-0.05in}
\item compute all the ($k$, $l$)-$\widehat{core}$s from the ($k$, $l$$-$$1$)-$\widehat{core}$s;
\vspace{-0.05in}
\item for each ($k$, $l$)-$\widehat{core}$, create a node $p$ ($p.vSet$ contains vertices in ($k$, $l$)-$\widehat{core}$, $p$.$coreNum$=$l$), link $p$ to its parent node $p'$, and update $p'.vSet$ as $p'.vSet\backslash p.vSet$;
\vspace{-0.05in}
\item increase $l$ by 1, and repeat steps 2 and 3 until $l$ reaches its maximum value.
\end{enumerate}

Since the above method builds trees of D-Forest in a top-down manner, we denote it by {\tt TopDown}, the pseudocodes of which are shown in Algorithm~\ref{alg:basicindex}.

\begin{algorithm}[]
\caption{Index construction algorithm: {\tt TopDown}.}
\label{alg:basicindex}
\footnotesize{
\algrenewcommand{\algorithmiccomment}[1]{\hskip3em$//$ #1}
\begin{algorithmic}[1]
    \Function{build($G$)}{}
        \State $\mathcal F\gets\emptyset$;
        \For{$k \gets  0$ to $k_{\max}$}
            \State $map \gets \emptyset$, compute $(k,0)$-core and create root node $t$;
             \For{$l \gets  1$ to $l_{\max}$}
             	\State $S \gets$ compute $(k,l)$-$\widehat{core}$s from $(k,l-1)$-$\widehat{core}$s;
             	\For{each $(k,l)$-$\widehat{core}$ in $S$}
             		\State create a node $p$ and $p.coreNum \gets l$;
             		\State $p.vSet \gets$ vertices in this $(k,l)$-$\widehat{core}$;
             		\State locate $p$'s parent node $p'$, link $p'$ and $p$;
             		\State $p'vSet \gets p'vSet\backslash p.vSet$;
             		\For{$v$ in $p'vSet$} $map.$put($v, p'$); \EndFor
             	\EndFor	
            \EndFor
            \State$\mathcal F$.add($\langle t$, $map\rangle$); 
        \EndFor
        \State\Return $\mathcal F$;
    \EndFunction
\end{algorithmic}}
\end{algorithm}

\begin{lemma}
\label{lemma:topdown}
Given a directed graph $G$, the time cost of building D-Forest using {\tt TopDown} is $O(m^2)$.
\end{lemma}

\subsection{A Bottom-Up Index Construction Method}
\label{sec:adv}
While {\tt TopDown} is easy to implement, it may suffer from the low efficiency issue, as shown in Lemma \ref{lemma:topdown}. To further improve the efficiency, we propose another more efficient index construction method {\tt BottomUp}, by introducing an auxiliary data structure called \underline{C}ore-based \underline{U}nion-\underline{F}ind (or CUF), which builds the trees in a bottom-up manner. Next, we first give an overview of {\tt BottomUp}, and then introduce the details.

\noindent{\bf\underline{$\bullet$ Overview of {\tt BottomUp}.}} Unlike {\tt TopDown}, {\tt BottomUp} enumerates the values of $k$ from $k_{\max}$ to 0, and builds each tree in a bottom-up manner (i.e., create leaf nodes at first and root node at last). Meanwhile, when building the $k$-tree, it exploits the information generated in building ($k$+$1$)-tree. 

\begin{algorithm}[]
\caption{Index construction algorithm: {\tt BottomUp}.}
\label{alg:index}
\footnotesize{
\algrenewcommand{\algorithmiccomment}[1]{\hskip3em$//$ #1}
\begin{algorithmic}[1]
    \Function{build($G$)}{}
        \State $\mathcal F\gets\emptyset$, $pre[\text{ }] \gets \emptyset$, $cur[\text{ }] \gets \emptyset$, $\Psi \gets \emptyset$;
        \For{$k \gets  k_{\max}$ to $0$}
            \State $map \gets \emptyset$, $\mathcal{P} \gets \emptyset$;
            \State $cur[\text{ }] \gets \Call{decompose($G$, $k$)}{}$;
            \State $V_0, \cdots, V_{l_{\max}}$ $\gets$ group vertices of $cur[\text{ }]$; 
            \For{$l \gets  l_{\max}$ to $0$}
                \State \textsc{BuildALevel}($k$, $l$, $V_l$, $pre[\text{ }]$, $cur[\text{ }]$, $map$, $\mathcal{P}$, $\Psi$);
            \EndFor
            \State  create root node $t$ and link to nodes in $\mathcal P$;
            \State $pre[\text{ }] \gets cur[\text{ }]$, $\mathcal F$.add($\langle t$, $map\rangle$);
        \EndFor
        \State\Return $\mathcal F$;
    \EndFunction
\end{algorithmic}}
\end{algorithm}

Algorithm~\ref{alg:index} outlines {\tt BottomUp}.
We first initialize an empty forest $\mathcal F$, two arrays $pre[\text{ }]$, $cur[\text{ }]$, where $pre[v]$ and $cur[v]$ are supposed to keep the maximum value of $l$ such that there is a ($k$, $l$)-core containing $v$. We also initialize the CUF data structure $\Psi$ which will be introduced later (line 2). 
 Then, we enumerate $k$ from $k_{\max}$ to 0 and for each $k$, we compute all the ($k$, $l$)-cores (lines 3-5), by using the algorithm in \cite{fang2019effective}. We initialize the vertex-node map $map$, and a set $\mathcal P$ for keeping the generated nodes for each level of $k$-tree (line 4). 
After that, we group vertices into a list of sets, such that $V_l$ contains vertices which are in the ($k$, $l$)-core but not in the ($k$, $l$--1)-core (line 6).
Next, we build the $k$-tree by invoking \textsc{BuildALevel} to create nodes in the $l$-th level of $k$-tree where $l$ ranges from $l_{\max}$ to $0$ (lines 7-8).
Finally, we create the root node with $\mathcal {P}$, and update $pre[\text{ }]$ and $\mathcal F$ (lines 9-10).

\noindent{\bf\underline{$\bullet$ Overview of function \textsc{BuildALevel}.}}
Given nodes in the ($l$+1)-th level of $k$-tree, the function \textsc{BuildALevel} creates nodes in the $l$-th level and links them to the nodes in the ($l$+1)-th level. Since each node corresponds to a ($k$, $l$)-$\widehat{core}$, a naive method to check the connectivity and create the node will take $O(m)$ time to re-explore the graph, i.e., executing steps 2 and 3 of {\tt TopDown}. Consequently, using this naive method totally takes $O(l_{\max}\cdot m)$ to build the $k$-tree, which is the same as that of {\tt TopDown}.

To improve the efficiency, we propose a novel data structure, called Core-based Union-Find (or CUF), which allows the three key steps of \textsc{BuildALevel} to be done efficiently:
(1) verifying the connectivity,
(2) memorizing the connectivity, and 
(3) linking nodes.
In the following sections, we first introduce the CUF data structure, and then present our CUF-based \textsc{BuildALevel}, which allows the $k$-tree to be built in $O(\alpha(n)\cdot m)$ time, where $\alpha(n)$ is the inverse Ackermann function and $\alpha(n) < 5$ for any practical value of $n$.

\noindent{\bf\underline{$\bullet$ CUF data structure.}}
CUF is extended from classic Union-Find (UF) Forest\footnote{\url{https://en.wikipedia.org/wiki/Disjoint-set_data_structure}}, which can efficiently verify the graph connectivity and partition vertices into different connected components. In the classic UF, each vertex has 2 elements, i.e., $rank$ and $parent$, and the UF has 3 functions, i.e., \textsc{makeSet}, \textsc{find} and \textsc{union}, where \textsc{makeSet} makes preparation for each vertex, \textsc{find} returns the representative member of the component to which the vertex belongs, and \textsc{union} merges two disjoint components as one. By using the classic UF, given a ($k$, $0$)-core, we can verify the connectivity and sequentially find all ($k$, $l$)-$\widehat{core}$'s by varying $l$ from $l_{\max}$ to $0$, and then build all the levels of the $k$-tree accordingly. However, classic UF may have two main limitations.

\begin{algorithm}[]
\caption{Functions of the CUF data structure.}
\label{alg:CUF}
\footnotesize{
\algrenewcommand{\algorithmiccomment}[1]{\hskip3em$//$ #1}
\begin{algorithmic}[1]
    \Function{makeSet($v$)}{}
        \State $v.rank\gets 0$, \ $v.parent \gets v$;
        \State \underline{$v.hook\gets v$, $v.group \gets v$;}
    \EndFunction

    \Function{find($v$)}{}
        \If {$v.parent$ = $v$}
            \State $v.parent\gets$ \Call{find($v.parent$)}{};
        \EndIf
        \State \Return $v.parent$;
    \EndFunction

    \Function{union($u$, $v$, $cur[\text{ }]$)}{}
        \State $r_u \gets$ \Call{find($u$)}{}, $r_v\gets$  \Call{find($v$)}{};
        \If {$r_u \neq r_v$}

        \If{$r_u.rank < r_v.rank$} \Call{swap($r_u$, $r_v$)}{}; \EndIf
        \State $r_v.parent\gets r_u$;
        \If{$r_u.rank = r_v.rank$}   
            \State $r_u.rank\gets r_u.rank + 1$; \EndIf
        \If{\underline{$cur[r_u.group] < cur[r_v.group]$ }}
            \State \underline{$r_u.group \gets r_v.group$;}
        \EndIf
        \EndIf
    \EndFunction

    \Function{updateCuf($V$, $cur[\text{ }]$)}{}        
        \For{\underline{$u \in V$}}
            \State \underline{$r \gets$ \Call{find($v$)}{};}
            \State \underline{$v.group \gets r.group$;}
            \If{\underline{$cur[r.hook] > cur[v]$}} \underline{$r.hook \gets v$;} \EndIf
            
        \EndFor
    \EndFunction

\end{algorithmic}}
\end{algorithm}

One limitation is that for a new node $p$ in the $l$-th level, classic UF can not efficiently find $p$'s all child nodes and link them up. 
As observed in Section~\ref{sec:overview}, each subtree below the $l$-th level corresponds to a particular ($k$, $l'$)-$\widehat{core}$ where $l'\textgreater l$. This means that to find $p$'s child nodes, we can first find all ($k$, $l'$)-$\widehat{core}$'s that are connected by vertices in $p.vSet$ and then link the root nodes of corresponding subtrees to $p$. 
To efficiently locate these subtrees, we assign another element $hook$ to directly indicate these root nodes. For example, for a vertex $v$ in $p.vSet$, if $v$'s neighbour $u$ is contained in a ($k$, $l'$)-$\widehat{core}$, we can locate the root node of the corresponding subtree by referring $hook$ and then link it to $p$.

The other limitation is that once the $l$-th level of the ($k$+$1$)-tree is constructed, the connectivity of the corresponding ($k$+$1$, $l$)-core is verified. When building the $l$-th level of the $k$-tree, we have to traverse the corresponding ($k$, $l$)-core and verify its connectivity from scratch. However, from Lemma~\ref{lm:nest}, ($k$+$1$, $l$)-core is a subgraph of ($k$, $l$)-core, which implies that the connectivity of this ($k$+$1$, $l$)-core will be verified again. 
To cut off this redundant computation, we assign an additional element $group$ in CUF structure to ``memorize'' the particular ($k$, $l$)-$\widehat{core}$ to which each vertex used to belong. For instance, if vertex $v$ is included in a certain ($k$+$1$, $l$)-$\widehat{core}$, $v.group$ will be marked; and when processing the ($k$, $l$)-core, by checking $v.group$, we would quickly know that $v$ should be gathered together with others who share the same value.

Algorithm~\ref{alg:CUF} summarizes all CUF functions and underlines our contribution.
To maintain the two additional elements, we propose a new function \textsc{updateCuf}. As shown in Algorithm~\ref{alg:CUF}, for each vertex $v$, we find the representative member $r$ of the particular ($k$, $l$)-$\widehat{core}$ including $v$, i.e., \textsc{find($v$)} (lines 18-19). Then we update $v.group$ as $r.group$ and set $r.hook$ as $v$ if $v$ has smaller value in $cur[\text{ }]$ (lines 20-21).
 
\begin{lemma}[Space cost]
Given a directed graph $G$, the CUF data structure of all vertices costs $O(n)$ space.
\end{lemma}

\begin{lemma}[Time cost of CUF functions]
\label{lm:cuf}
\textsc{makeSet} takes $O(1)$ time; for \textsc{union} and \textsc{find}, the amortized time per operation is $O(\alpha(n))$; \textsc{updateCuf} takes $O(\alpha(n) \cdot |V|)$ time.
\end{lemma}

\noindent{\bf\underline{$\bullet$ Details of function \textsc{BuildALevel}.}}
Algorithm~\ref{alg:build} shows the details. Firstly, we find root nodes of subtrees to be linked to new nodes in this level. For each vertex $v$, we initialize a set $S_v$; we visit $v$'s neighbour $u$ to find the root $p'$ of the subtree including $u$ and collect it in $S_v$ (lines 2-8). 
Then we use CUF to verify the subgraph connectivity for this level. We initialize a set $V'$ to collect vertices if they has been previously processed in the $l$-th level of the ($k$+$1$)-tree (lines 9-13). For vertices in $V'$, we directly use their $group$ to achieve a quick \textsc{union}; for others, we visit their neighbours to check the connectivity by invoking \textsc{batchUnion} (lines 15-16, 24-28). 
After that, for each vertex set $C_i$ sharing the same CUF root, we create the node $p$ and update $\mathcal P$ (lines 17-19). For each vertex $v$ in $C_i$, we update the vertex-node map $map$ and link child nodes in $S_v$ to $p$ (lines 20-22).
Finally, we update CUF for the constrcution of next level (line 23).  

\begin{algorithm}[htp]
\caption{Process vertices to create tree nodes.}
\label{alg:build}
\footnotesize{
\algrenewcommand{\algorithmiccomment}[1]{\hskip3em$//$ #1}
\begin{algorithmic}[1]
    \Function{BuildALevel($k$, $l$, $V_l$, $pre[\text{ }]$, $cur[\text{ }]$, $map$, $\mathcal P$, $\Psi$)}{}
        \State initialize $S_{v_1}$, $\cdots$, $S_{v_i}$, $\cdots$ for vertices $v_1$, $\cdots$, $v_i$, $\cdots \in V_l$; 
        \For{$ v \in V_l$}
            \For{$u \in N(v)$}
                    \If{$cur[u] > cur[v]$}
                        \State $r_u \gets$ $\Psi$.\Call{find($u$)}{};
                        \State $p' \gets map.$get($r_u.hook$);
                        \State $S_v$.add($p'$), \ $\mathcal P$.delete($p'$);
                    \EndIf
                \EndFor 
        \EndFor

        \State $V' \gets \emptyset$;
         \For{$v \in V_l$}
            \If{$k \neq k_{\max}$ and $pre[v] = l$}
                \State $v.parent \gets v$, $v.hook \gets v$;
                \State  $v.rank \gets 0$, $V'.$add($v$);
            \Else
                \ \ $\Psi$.\Call{makeSet($v$)}{};
            \EndIf

         \EndFor
        \State \Call{batchUnion($V_l\backslash V'$, $cur[\text{ }]$, $\Psi$)}{};

        \For{$v \in V'$}
            $\Psi$.\Call {union($v$, $v.group$, $cur[\text{ }]$)}{};
        \EndFor

        \For{each set $C_i \subseteq V_l$ with the same CUF root}
            \State $p \gets$ create tree node by using $l$ and $C_i$;
            \State $\mathcal P$.add($p$);
            \For{each $v \in C_i$}
                \State $map$.put($v$, $p$);
                \State link nodes in $S_v$ with $p$; 
            \EndFor
        \State $\Psi$.\Call{updateCuf($C_i$, $cur[\text{ }]$)}{};
        \EndFor
    \EndFunction
     \Function{batchUnion($V$, $cur[\text{ }]$, $\Psi$)}{}
        \For {$v \in V$}
                \For{$u \in N(v)$}
                    \If{$cur[u] \geq cur[v]$} 
                    \State $\Psi$.\Call{union($u$, $v$, $cur[\text{ }]$)}{};\EndIf
                \EndFor
        \EndFor
    \EndFunction 
\end{algorithmic}}
\end{algorithm}

\begin{lemma}
\label{lm:bottomup}
Given a directed graph $G$, the time cost of building D-Forest using {\tt BottomUp} is $O(\alpha(n)\cdot m \cdot \sqrt{m})$.
\end{lemma}

\section{Discussions}
\label{sec:discussion}
\vspace{-0.02in}
In this section, we discuss how to solve a variant of the CSD problem using D-Forest. We also discuss index maintenance techniques.

\vspace{-0.05in}
\subsection{SCSD Problem}
\label{sec:scsd}
\vspace{-0.02in}
The community returned by a CSD query may not be a strongly connected component (SCC), i.e., each pair of vertices in are mutually reachable. To search the SCC-based communities, Fang et al. introduced the SCSD problem:

\begin{problem}[SCSD problem~\cite{fang2019effective}]
Given a directed graph $G(V,E)$, a query vertex $q$, and two positive integers $k$ and $l$, return a subgraph $G'$ such that $G'$ is an SCC containing $q$ and $\forall v \in G'$, $deg^{in}_{G'}(v) \geq k$ and $deg^{out}_{G'}(v) \geq l$.
\end{problem}

The SCSD problem can be readily solved by using D-Forest. Specifically, we first retrieve the ($k$, $l$)-$\widehat{core}$ $G'$ containing $q$ from the D-Forest index. Then, we compute the SCC $C_q$ containing $q$ from $G'$. After that, if $C_q$ does not satisfy the minimum degree constraints, we compute a ($k$, $l$)-$\widehat{core}$ containing $q$ from $C_q$ and then repeat the above steps to find the answer. Note that to compute SCCs of a graph, we can use Kosaraju's algorithm~\cite{hopcroft1983data}, which only takes linear time. We denote this algorithm by {\tt IDX-SQ}, and omit its pseudocodes for lack of space.

\vspace{-0.05in}
\subsection{Index Maintenance}
\label{sec:maintain}
\vspace{-0.02in}
In practice, graphs are often dynamically updated, so it is necessary to update D-Forest accordingly. A simple method is to rebuild D-Forest from scratch when the graph updates. This, however, is very costly, and thus it is desirable to develop efficient index maintenance techniques. As shown in~\cite{li2013efficient,fang2017effective}, when an edge is inserted or deleted, only a few ($k$, $l$)-cores change while others remain unchanged, so it is possible to update D-Forest efficiently.
Specifically, when a new edge ($u$, $v$) is added in the ($k$, $l$)-core, $u$'s out-degree and $v$'s in-degree will increase by 1. To update D-Forest, we execute three steps:
(1) for the vertex $u$, if the increase of $u$'s out-degree enables $u$ to be included in the ($k$, $l$+1)-core, we locate the node containing $u$ in the $k$-tree and then move $u$ down to the node corresponding to this ($k$, $l$+1)-core;
(2) for the vertex $v$, similarly, if the increase of $v$'s in-degree enables $v$ to be grouped into the ($k$+1, $l$)-core, then we add $v$ to the node corresponding to this ($k$+1, $l$)-core;
(3) after including a vertex into another core, the connectivity of some connected components of the cores may change, so we need to merge the subtrees corresponding to the newly updated subgraph.
For the case of edge deletion, we can update the index by taking the inverse process of above steps. Note that the case of vertex update can be regarded as sequentially inserting or deleting a list of edges.

\section{Experiments}
\label{sec:exp}

\subsection{Setup}
\label{sec:setup}

We use six real large directed graphs and summarize their statistics in Table~\ref{tab:setup}, where $d$ is the average degree of vertices. The Twitter dataset~\footnote{\url{https://www.isi.edu/lerman/downloads/twitter/twitter2010.html}} is collected by Kristina Lerman. The eu-2015, arabic, it-2004, sk-2005 and uk-2007 datasets~\cite{boldi2018bubing,BoVWFI,chen2021efficient} are available in the website\footnote{\url{http://law.di.unimi.it/datasets.php}}. We implement all the algorithms in Java, and run the experiments on a Linux machine with Ten-core Intel E7-4820 V3 CPU@1.90GHz and 300GB memory.

\begin{table}[hbtp]
    \centering \scriptsize
    \caption{Datasets in our experiments.}
    \label{tab:setup}
    \setlength{\tabcolsep}{2.6mm}{
    \begin{tabular}{c|c|c|c|c|c}
        \hline
         {\bf Graph} &\multicolumn{1}{c|}{\textbf{$n$}}
         &\multicolumn{1}{c|}{\textbf{$m$}}
         &\textbf{$d$}
         &\textbf{$k_{\max}$}
         &\textbf{$l_{\max}$}\\
          \hline
          \hline
         Twitter    &   699,986     &  36,743,091 &   52.49       & 443   &448\\
         \hline
         eu-2015    &   6,650,532   &   165,693,531 &     24.91      & 9,568 & 9,569\\
         \hline
         arabic     &   22,744,080  &   639,999,458 & 28.14 &3,126  &3,126  \\
         \hline
         it-2004    &   41,291,594  &  1,150,725,436  &   27.86  & 3,198 &3,197\\
         \hline
         sk-2005    &   50,636,154  &   1,949,412,601 &  38.50 &4,502  &4,502\\
         \hline
         uk-2007    &   110,123,614 & 3,944,932,566 &   35.82  &10,027 &10,027  \\
         \hline
    \end{tabular}
    }
\end{table}

\begin{figure*}[ht]
\hspace{-0.35in}
\centering
\begin{tabular}{c c c c c c}{}
  \begin{minipage}{2.6cm}
  \includegraphics[width=2.8cm]{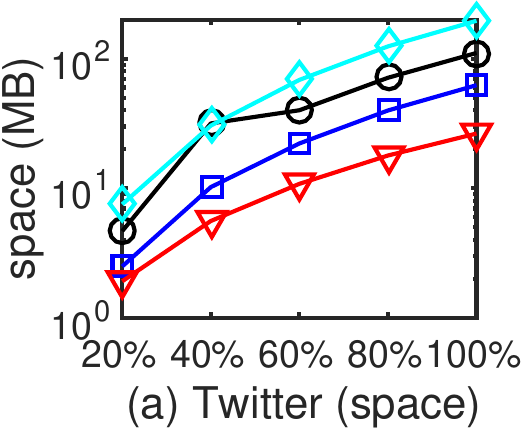}
  \end{minipage}
  &
  \begin{minipage}{2.6cm}
  \includegraphics[width=2.8cm]{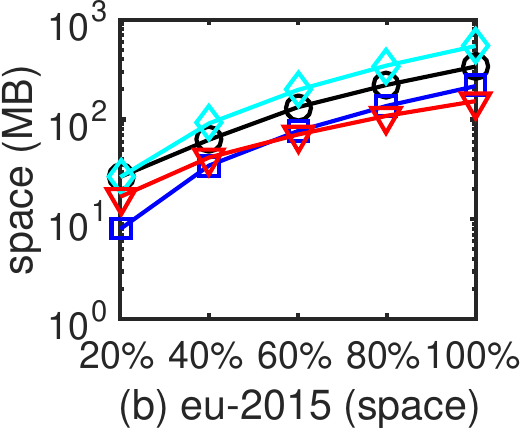}
  \end{minipage}
  &
  \begin{minipage}{2.6cm}
  \includegraphics[width=2.8cm]{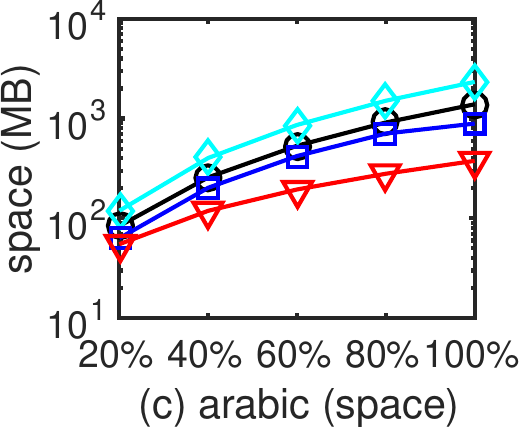}
  \end{minipage}
  &
  \begin{minipage}{2.6cm}
  \includegraphics[width=2.8cm]{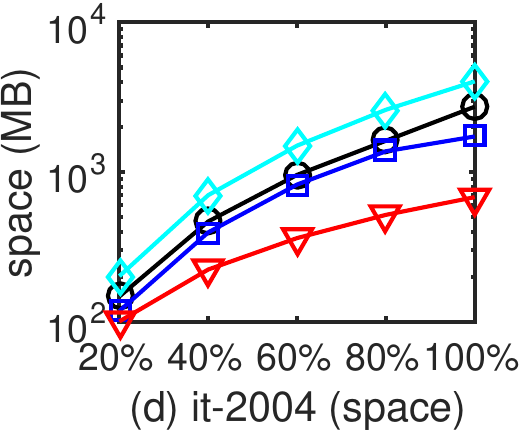}
  \end{minipage}
  &
  \begin{minipage}{2.6cm}
  \includegraphics[width=2.8cm]{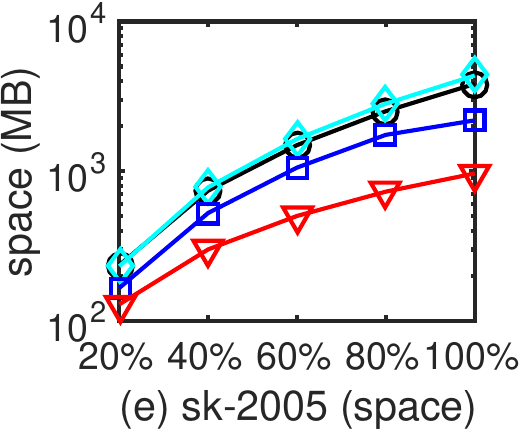}
  \end{minipage}
  &
   \begin{minipage}{2.6cm}
  \includegraphics[width=2.8cm]{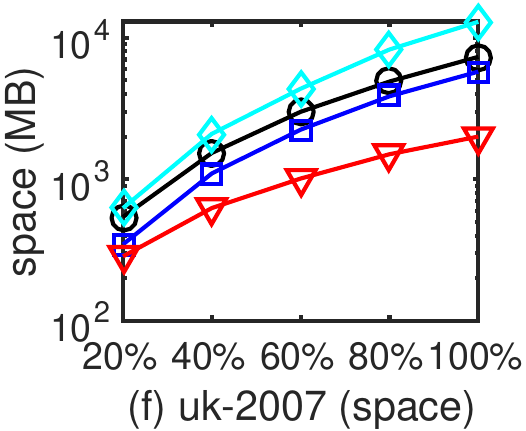}
  \end{minipage}
  \\
  &
  &
  &
  \hspace{-2in}
  \begin{minipage}{2.6cm}
  \includegraphics[width=5cm]{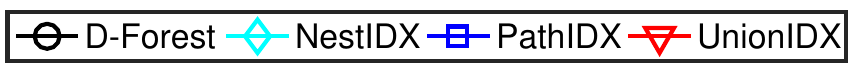}
  \end{minipage}
  \\
  \begin{minipage}{2.6cm}
  \includegraphics[width=2.8cm]{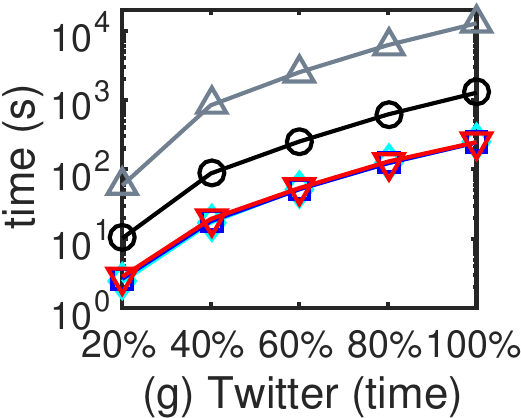}
  \end{minipage}
  &
  \begin{minipage}{2.6cm}
  \includegraphics[width=2.8cm]{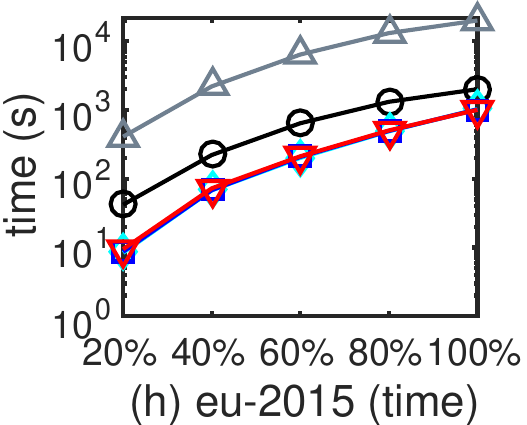}
  \end{minipage}
  &
  \begin{minipage}{2.6cm}
  \includegraphics[width=2.8cm]{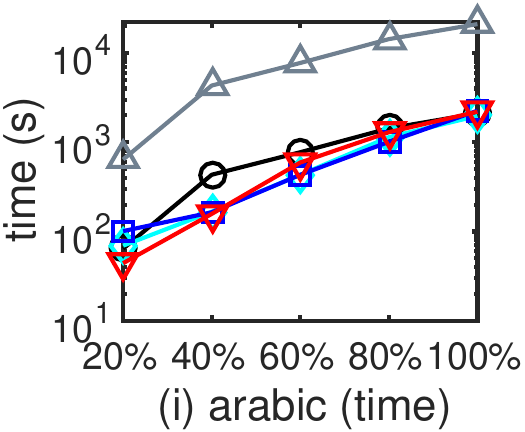}
  \end{minipage}
  &
  \begin{minipage}{2.6cm}
  \includegraphics[width=2.8cm]{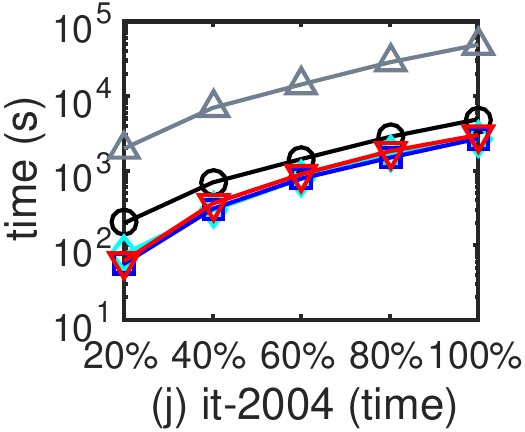}
  \end{minipage}
  &
  \begin{minipage}{2.6cm}
  \includegraphics[width=2.8cm]{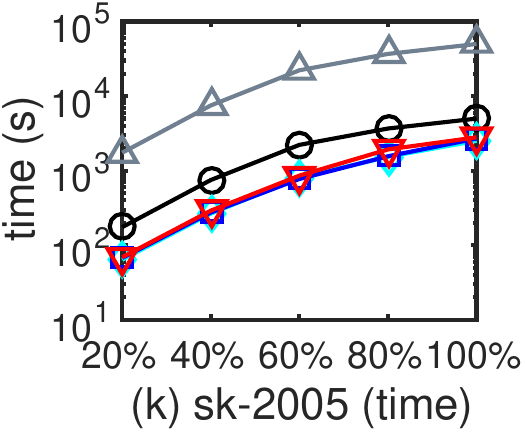}
  \end{minipage}
  &
   \begin{minipage}{2.6cm}
  \includegraphics[width=2.8cm]{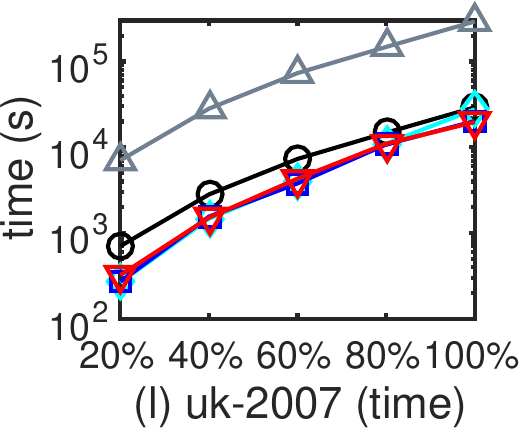}
  \end{minipage}
  \\
  &
  &
  &
  \hspace{-2.7in}
  \begin{minipage}{2.6cm}
  \includegraphics[width=6.8cm]{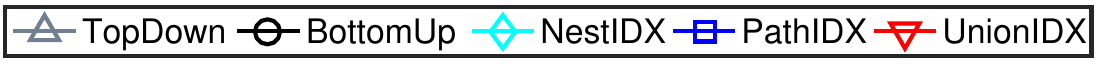}
  \end{minipage}

\end{tabular}
\caption{The space cost of D-Forest and the time cost of index construction.}
\label{fig:exp-index}
\end{figure*}

\begin{figure*}[ht]
\hspace{-0.4in}
\centering
\begin{tabular}{c c c c c c}
  \begin{minipage}{2.6cm}
  \includegraphics[width=2.9cm]{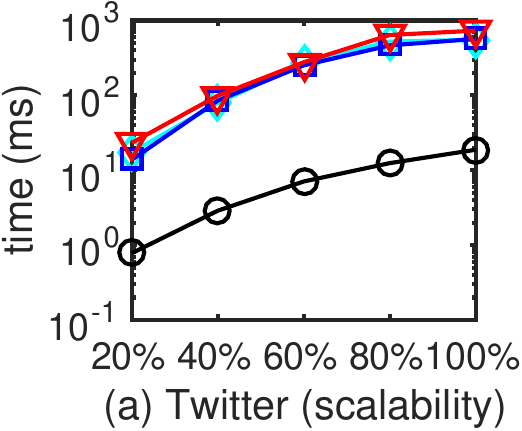}
  \end{minipage}
  &
  \begin{minipage}{2.6cm}
  \includegraphics[width=2.8cm]{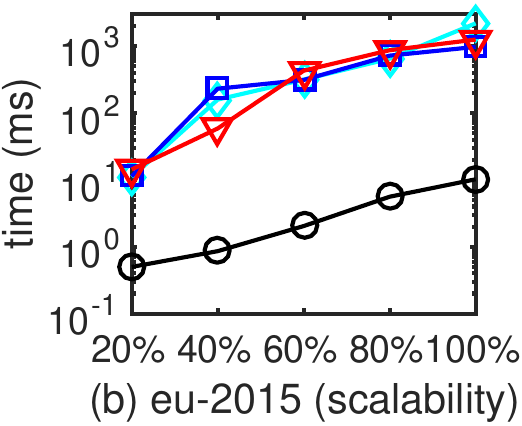}
  \end{minipage}
  &
  \begin{minipage}{2.6cm}
  \includegraphics[width=2.9cm]{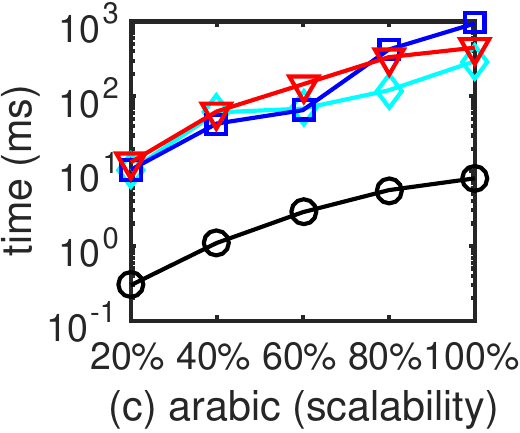}
  \end{minipage}
  &
   \begin{minipage}{2.6cm}
  \includegraphics[width=2.9cm]{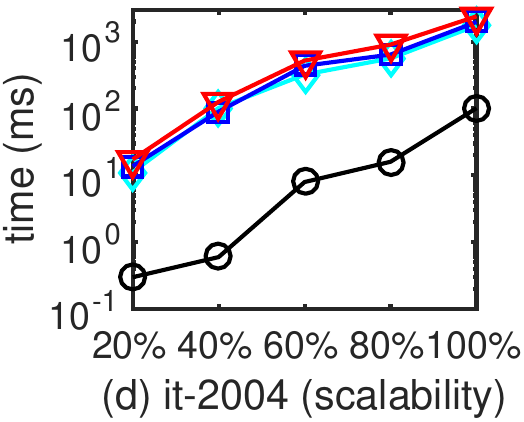}
  \end{minipage}
  &
   \begin{minipage}{2.6cm}
  \includegraphics[width=2.9cm]{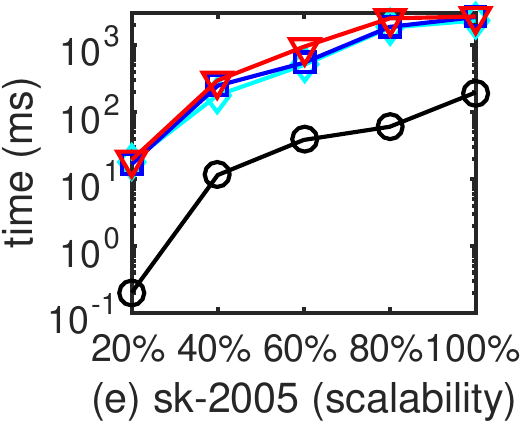}
  \end{minipage}
  &
   \begin{minipage}{2.6cm}
  \includegraphics[width=2.9cm]{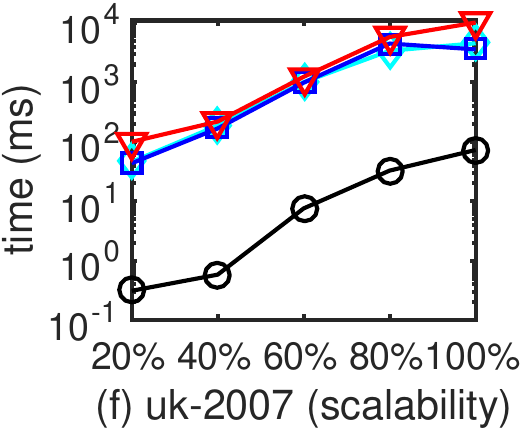}
  \end{minipage}
  \\
  \begin{minipage}{2.6cm}
  \includegraphics[width=2.8cm]{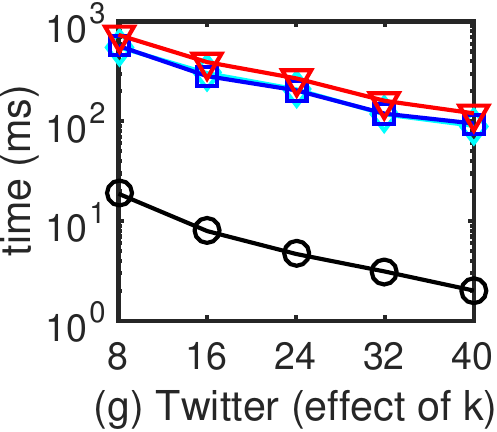}
  \end{minipage}
  &
  \begin{minipage}{2.6cm}
  \includegraphics[width=2.8cm]{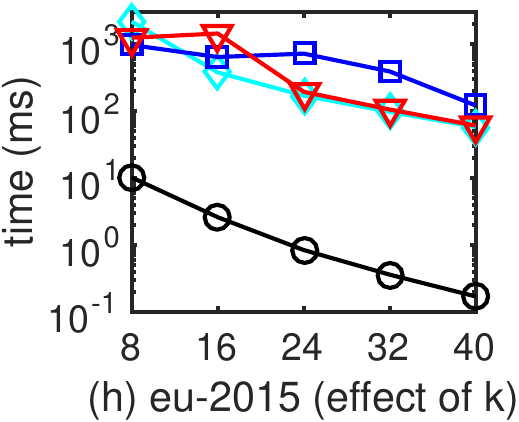}
  \end{minipage}
  &
  \begin{minipage}{2.6cm}
  \includegraphics[width=2.8cm]{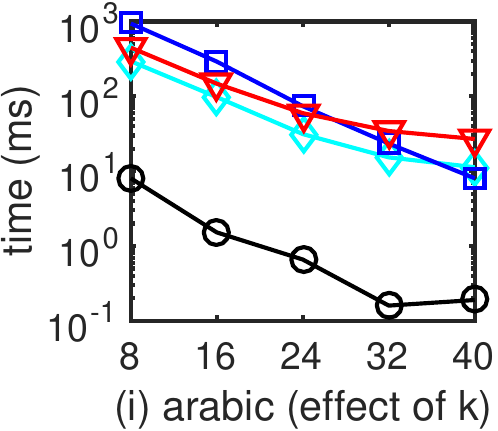}
  \end{minipage}
  &
   \begin{minipage}{2.6cm}
  \includegraphics[width=2.8cm]{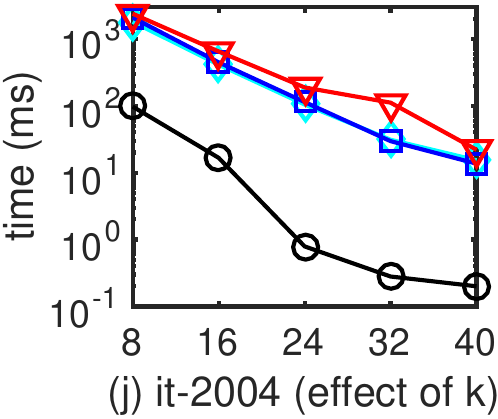}
  \end{minipage}
  &
   \begin{minipage}{2.6cm}
  \includegraphics[width=2.8cm]{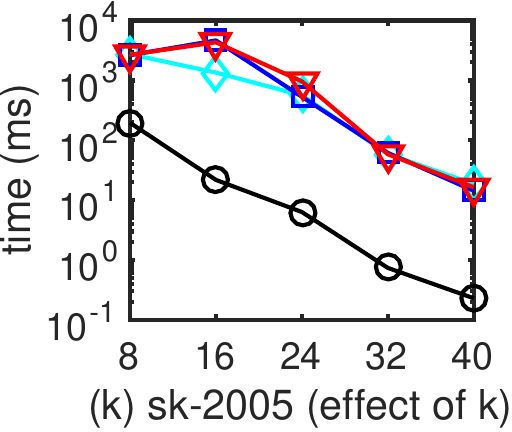}
  \end{minipage}
  &
   \begin{minipage}{2.6cm}
  \includegraphics[width=2.8cm]{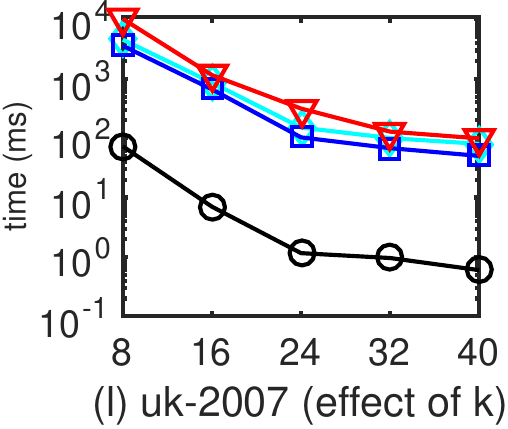}
  \end{minipage}
  \\
  \begin{minipage}{2.6cm}
  \includegraphics[width=2.8cm]{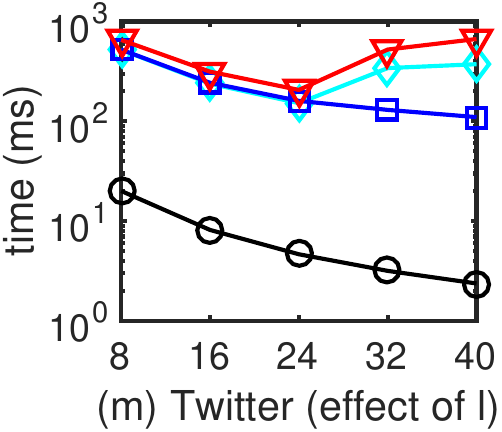}
  \end{minipage}
  &
  \begin{minipage}{2.6cm}
  \includegraphics[width=2.8cm]{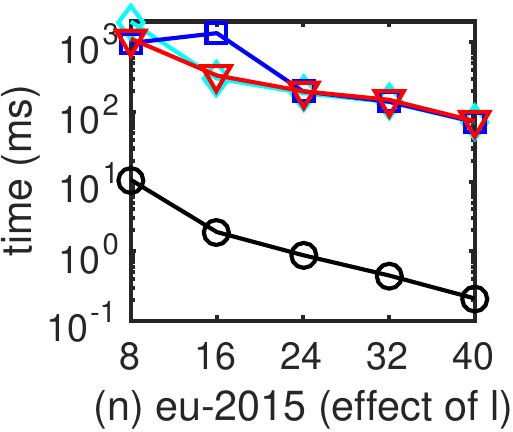}
  \end{minipage}
  &
  \begin{minipage}{2.6cm}
  \includegraphics[width=2.8cm]{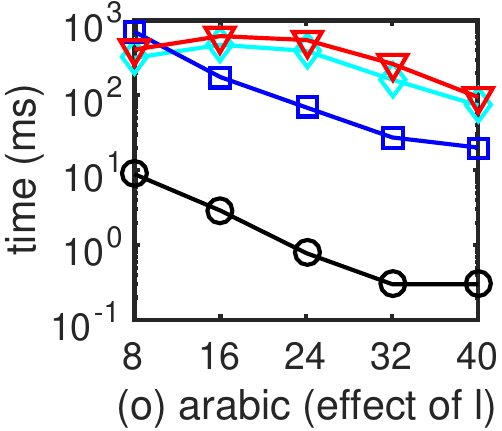}
  \end{minipage}
  &
   \begin{minipage}{2.6cm}
  \includegraphics[width=2.8cm]{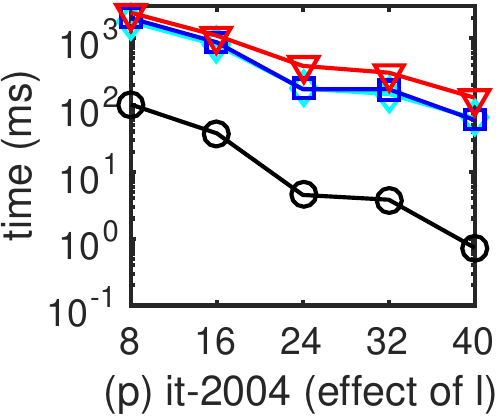}
  \end{minipage}
  &
   \begin{minipage}{2.6cm}
  \includegraphics[width=2.8cm]{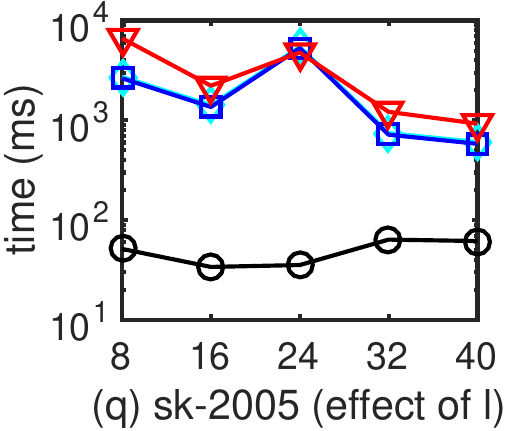}
  \end{minipage}
  &
   \begin{minipage}{2.6cm}
  \includegraphics[width=2.8cm]{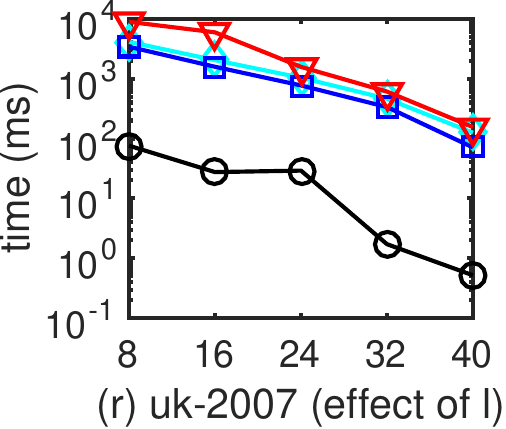}
  \end{minipage}
  \\
  &
  &
  &
  \hspace{-1.85in}
\begin{minipage}{2.6cm}
  \includegraphics[width=5cm]{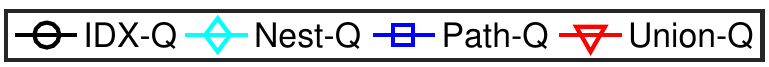}
  \end{minipage}
\end{tabular}
\caption{The efficiency of answering CSD queries.}
\label{fig:exp-query}
\end{figure*}

\subsection{Experimental Results}
\label{sec:results}
We compare our approach with the state-of-the-art solutions~\cite{fang2019effective} which propose three indices, i.e., {\tt NestIDX}, {\tt PathIDX} and {\tt UnionIDX}, and three corresponding query algorithms, i.e., {\tt Nest-Q}, {\tt Path-Q} and {\tt Union-Q}. Using the same evaluation strategy in~\cite{fang2019effective}, we respectively report the efficiency results of index construction and queries in Figures~\ref{fig:exp-index} and \ref{fig:exp-query}. To evaluate the efficiency of index construction, for each dataset, we randomly select 20\%, 40\%, 60\% and 80\% of its vertices and obtain the four subgraphs induced by these vertices. 

\noindent{\bf\underline{1. Space cost of D-Forest.}}
To measure the space cost of an index, we store all the index elements, which can be used to recover the index, into the disk. As shown in Figure~\ref{fig:exp-index} (a)-(f), with the sizes of sub-datasets growing, the space cost of D-Forest and others increase steadily. Besides, D-Forest uses comparable space cost as other indexes, meaning that D-Forest is as space-efficient as the state-of-the-art solutions.

\noindent{\bf\underline{2. Time cost of index construction.}}
From Figure~\ref{fig:exp-index}(g)-(l), we can observe that {\tt BottomUp} takes similar time cost with state-of-the-art solutions . For instance, on the largest dataset uk-2007 (35.4GB in disk), {\tt BottomUp} takes only 8.12 hours to build D-Forest. And {\tt BottomUp} always runs at least 10 times faster than {\tt TopDown}. We remark that to save the computational resources, for each test, we terminate {\tt TopDown} if it runs 10 times longer than {\tt BottomUp}.

\noindent{\bf\underline{3. Scalability evaluation of CSD queries.}}
In Figure~\ref{fig:exp-query}(a)-(f), we evaluate the scalability of CSD query algorithms over different sizes of datasets. As suggested in~\cite{fang2019effective}, for each sub-dataset, we randomly select 200 vertices which are in the (8,8)-cores as the query vertices and set $k$=$l$=$8$. Generally, the running time of all algorithms increases as the size of sub-datasets grows. Besides, our query algorithm {\tt IDX-Q} achieves the best performance on all the datasets. For example, on uk-2007 dataset with around 4 billion edges, {\tt IDX-Q} takes 72ms on average to answer a query and runs about 100 times faster than the existing solutions.

\noindent{\bf\underline{4. Effect of $k$ and $l$ in CSD queries.}}
We depict the effect of $k$ and $l$ on the efficiency of CSD queries in Figure~\ref{fig:exp-query}(g)-(r). For each dataset, we randomly select 200 vertices within the (8,8)-cores to query. We see that as the value of $k$ and $l$ increases, the returned communities become smaller, so the time cost of all query algorithms decreases. Again, {\tt IDX-Q} is up to two orders of magnitude faster than the three existing algorithms which generally achieve similar performance.

\noindent{\bf\underline{5. Efficiency of SCSD queries.}}
Generally, SCSD queries are slower than CSD queries, since they have to spend extra time cost to compute SCC's.
We have compared {\tt IDX-SQ} with {\tt Nest-SQ}, {\tt Path-SQ} and {\tt Union-SQ} proposed in~\cite{fang2019effective}, and the results show that {\tt IDX-SQ} can run about 4 times faster than these three algorithms. For example, on uk-2007, {\tt IDX-SQ} takes 1.85s on average to find an SCC-based (8, 32)-core, while others take 3.21s, 2.38s and 6.88s. For lack of space, we omit the detailed results here.

\section{Conclusion}
\label{sec:conclusion}
In this paper, we examine the CSD problem and design a novel index D-Forest, based on which a CSD query can be completed in the optimal time cost. To build the index, we propose a basic algorithm {\tt TopDown}, and an advanced algorithm {\tt BottomUp} by introducing the CUF data structure. 
We also discuss the index maintenance for dynamic graphs and how to answer the SCSD query using D-Forest. 
The experimental results on six real large graphs demonstrate the efficiency of our solutions.

In the future, we will investigate how to extend other cohesiveness metrics, such as $k$-truss and $k$-clique, to search communities in large directed graphs.
We will also study how to find meaningful communities from directed graphs with attributes, e.g., keywords and locations.

\section*{Acknowledgments}

\appendix
\section{Proofs of Lemmas}
\label{appendixA}

\addtocounter{lemma}{-6}

\begin{lemma}[Space cost]
\label{lm:space}
Given a directed graph $G$, its D-Forest takes $O(m)$ space.
\end{lemma}
\begin{proof}
For each vertex $v$, if its in-degree is $k$, then it appears in at most $k$ trees, and in each tree, it appears only twice (one in the tree and one in the auxiliary map). Thus, the space cost of $v$ is bounded by $O(deg_G^{in}(v))$. Hence, Lemma \ref{lm:space} holds.
\end{proof}

\begin{lemma}[Query cost]
\label{lm:query}
Given a D-Forest, {\tt IDX-Q} completes in the optimal time and space cost, i.e., $O(|C|)$.
\end{lemma}
\begin{proof}
The lemma directly follows the observation.
\end{proof}

\begin{lemma}
\label{lemma:topdown}
Given a directed graph $G$, the time cost of building D-Forest using {\tt TopDown} is $O(m^2)$.
\end{lemma}
\begin{proof}
Given a specific $k$, computing all the ($k$, $l$)-cores from the ($k$, $0$)-core takes $O(m)$ time; besides, searching all the ($k$, $l$)-$\widehat{core}$s from the ($k$, $l$$-$$1$)-$\widehat{core}$s takes $O(m)$~\cite{fang2019effective}.
Thus, it takes $O(l_{\max}\cdot m)$ time to build the $k$-tree.
Since there are ($k_{\max}$$+$$1$) trees, the total cost is $O(k_{\max}\cdot l_{\max}\cdot m)$.
Meanwhile, $k_{\max}$ and $l_{\max}$ are at most $(\sqrt{4m+1} - 1)/2$ \cite{fang2019effective}, so Lemma \ref{lemma:topdown} holds.
\end{proof}

\begin{lemma}[Space cost]
Given a directed graph $G$, the CUF data structure of all vertices costs $O(n)$ space.
\end{lemma}
\begin{proof}
The lemma directly follows the observation.
\end{proof}

\begin{lemma}[Time cost of CUF functions]
\label{lm:cuf}
\textsc{makeSet} takes $O(1)$ time; for \textsc{union} and \textsc{find}, the amortized time per operation is $O(\alpha(n))$; \textsc{updateCuf} takes $O(\alpha(n) \cdot |V|)$ time.
\end{lemma}
\begin{proof}
Obviously, \textsc{makeSet} for each vertex takes $O(1)$. As for \textsc{union} and \textsc{find}, since they use \emph{union by rank} and \emph{path compression} optimization, the amortized time per operation is $O(\alpha(n))$~\cite{tarjan1979class}. Thus, \textsc{updateCuf} totally takes $O(\alpha(n) \cdot |V|)$ for all vertices in $V$.
\end{proof}

\begin{lemma}
\label{lm:bottomup}
Given a directed graph $G$, the time cost of building D-Forest using {\tt BottomUp} is $O(\alpha(n)\cdot m \cdot \sqrt{m})$.
\end{lemma}
\begin{proof}
Decomposing the ($k$, $0$)-core for a specific $k$ takes $O(m)$ time~\cite{fang2019effective}.
Then \textsc{buildALevel} takes $O(\alpha(n) \cdot m_l)$, where $m_l$ is the edges visited in the $l$-th level. This implies that building $k$-tree takes $O(\alpha(n)\cdot m)$ in total. 
Thus {\tt BottomUp} takes $O(k_{\max}\cdot \alpha(n)\cdot m)$, which is upper bounded by $O(\alpha(n)\cdot m \cdot \sqrt{m})$. Lemma~\ref{lm:bottomup} holds.
\end{proof}




\newpage
\bibliographystyle{apalike}
\bibliography{ref}

\end{document}